\documentclass[runningheads]{llncs}

\usepackage[T1]{fontenc}
\usepackage[utf8]{inputenc}
\usepackage{graphicx}
\usepackage{amsmath}
\usepackage{amssymb}
\usepackage[numbers]{natbib}
\usepackage{booktabs}
\usepackage{algorithm}
\usepackage{algorithmic}
\usepackage{url}
\usepackage[hidelinks]{hyperref}
\usepackage[small]{caption}
\usepackage{soul}
\usepackage{cleveref}
\usepackage{dsfont}
\usepackage{comment}
\usepackage{tikz}
\usetikzlibrary{positioning,arrows.meta}
\usepackage{todonotes}
\usepackage{thm-restate}

\newcommand{\hs}[1]{\todo[inline,color=orange!30]{Henri: #1}}

\spnewtheorem{observation}{Observation}{\bfseries}{\itshape}

\begin{document}

\title{An Enriched Model of Strategic Voting under Uncertainty}

\titlerunning{An Enriched Model of Strategic Voting}


\author{Henri Surugue\inst{1} \and Sébastien Destercke\inst{2}}
\authorrunning{H. Surugue and S. Destercke}
 \institute{
   LIP6 - Nukkai, Sorbonne University, Paris, France \\ \email{henri.surugue@lip6.fr}
   \and
   CNRS - Heudiasyc, UTC, Compiègne, France \\ \email{sebastien.destercke@cnrs.fr}
 }
 

\maketitle

\begin{abstract}
We present a new strategic voting model where we use uncertainty representation to model preferences. Specifically, we use probability sets as uncertainty representations, together with lower and upper expected utility gains to take strategic decisions. Focusing on belief functions in particular, we demonstrate that this very expressive model includes in one sweep many existing models based on probabilities, sets or incomplete preferences. Additionally, we generalize several well-known convergence results from the literature to this broader representational setting. Furthermore, we illustrate how this model can capture more realistic scenarios for practical applications but also raises theoretical challenges.
\end{abstract}

\keywords{Preference uncertainty \and Belief functions \and Strategic voting}

\section{Introduction}

Strategic voting~\cite{meir2018strategic} arises when voters have an incentive to misreport their true preferences to obtain a more desirable result. The Gibbard-Satterthwaite theorem~\cite{satterthwaite1975strategy} shows that no reasonable voting rule is immune to strategic manipulation. This makes it desirable to formalise the notion of strategic votes and investigate their consequences, first in a single-step, single-voter setting and then in multi-steps, iterative settings. This can be done in various ways: formulating the problem as a strategic game~\cite{meir2010convergence} or as a Bayesian decision problem~\cite{myerson1993theory,hazon2012evaluation}. These approaches raise the question of how voters act strategically as well as the role of polls, and show that understanding how information impacts voters is crucial~\cite{endriss2016strategic,reijngoud2012voter}. 

Several reasons justify the need for an uncertainty-based approach to polls, stemming either from the polling process itself or from the voters themselves. Indeed, building polls, even when done honestly, comes with uncertainty because you cannot ask everyone's preferences (for dishonest polls, see~\cite{mousseau2024do}). Voters' uncertainty can come from unsure preferences or votes, especially if the elections are still far in the future. Whatever the uncertainty sources, they translate into uncertainty about the scores when considering the Plurality rule, as these are the only relevant pieces of information for strategic behavior under this rule. Models of uncertain polls either consider a distribution over possible score vectors \cite{myerson1993theory}, or uncertain neighborhood of scores \cite{meir2017iterative}. Uncertain voters, on their side, are commonly modeled by the means of incomplete preferences \cite{conitzer2011dominating,dey2018complexity,terzopoulou2024iterative}.
Within such models, strategic vote changes have been studied either as a single-step change for a voter~\cite{myerson1993theory,conitzer2011dominating} or as a multi-step iterative procedure~\cite{meir2017iterative,terzopoulou2024iterative,lev2019heuristic}.

Our contributions in this paper are the following:
\begin{itemize}
    \item We consider modeling voting uncertainties by sets of probabilities, a quite expressive model that includes in one sweep set-based and probabilistic approaches and allows voters to better express their uncertainty. Section~\ref{sec:model} introduces the representation, and provides several examples showing how its expressive power can help in poll or voter uncertainty modeling;
    \item In Section~\ref{sec:motivations}, we introduce our decision-theoretic part, providing several decision rules including known ones~\cite{myerson1993theory, conitzer2011dominating}, that we can revisit and interpret within our framework. Such a higher expressiveness could also be useful in descriptive studies, as more voter behaviours can be described by the model;
    \item Finally, our study culminates in Section~\ref{sec:convergence_results} that considers iterative, multi-step frameworks, for which we study convergence results in the style of  Meir~\cite{meir2022strategic}. 
\end{itemize}

\section{Uncertainty model on votes}
\label{sec:model}

We start by introducing the elements of our newly proposed model. As the model itself is one of the main contribution of the paper, we will describe and illustrate it in detail. 

\subsection{Modeling Plurality Elections from the Perspective of Strategic Voting}

Let $N$ be a set of voters where $N=\{1, \ldots, n\}$, and $M$ be a set of candidates where $M=\{1,\ldots,m\}$. 
As we study strategic voting, we assume that $m>2$, as voting rules are susceptible to manipulation only in those cases~\cite{satterthwaite1975strategy}. We assume that each voter $i\in N$ has some true underlying preferences, possibly imperfectly known, over candidates represented by a linear order $\succ_i$ over candidates.
Let $\Pi^m$ be the set of all possible preference orders for $m$ candidates.
In this paper, the winner of an election is determined by the Plurality voting rule where ties are broken lexicographically.
Let $b_i\in M$ denote the ballot of voter $i$ and $b \in M^n$ denote the ballot profile. 
The winner under Plurality of the ballot profile $b$ is $\mathcal{W}(b)\in\arg\max_{x \in M}{s_x(b)} $, where $s_x(b) := |\{i \in N : b_i=x  \}|$ and a lexicographic tie-breaking order, denoted by $\rhd$, is used if necessary. Given a number $n$ of voters, we will denote by $\mathcal{S}=\{s:\sum_{x \in M} s_x(b)=n\}$ the space of possible score vectors. By abuse of notation, we sometimes directly write $\mathcal{W}(s)$ to refer to the winner of a score vector $s$, and more generally use a ballot profile and its corresponding score interchangeably.

While strategic voting is unavoidable~\cite{satterthwaite1975strategy}, a voter can only strategize if she has some information about the election. In plurality, this is relatively straightforward, as knowing the current scores is sufficient to determine whether strategic voting is beneficial. However, the broadcast score is usually subject to uncertainty, as polling institutes can only survey a subset of voters, and voters themselves may be uncertain about their preferences before the election. Two models have met consensus in the literature to handle this problem. 

\subsection{Representing uncertain votes and polls}

\paragraph{Existing representations}
Existing uncertainty representations typically focus on sets and probabilities. The main ones  encountered in strategic voting literature are:
\begin{itemize}
    \item sets $S \subseteq \mathcal{S}$ of possible scores, as for instance in the model of Meir~\cite{meir2017iterative}. This is arguably one of the simplest representation of uncertainty. 
    \item probabilities $p(s)$ of scores over $\mathcal{S}$, which Myerson and Weber~\cite{myerson1993theory} derived from observed votes through a multinomial model. In our case, it suffices to retain that it produces a probability over scores. 
    \item Conitzer et al.~\cite{conitzer2011dominating} builds on the model of Konczak et al.~\cite{konczak2005voting} using the notions of possible and necessary winners, considering that a voter uncertainty is given by a partial order over the $M$ candidates, provided in the form of pairwise comparisons. For plurality rules, this is equivalent to considering the set of maximal elements of the partial order as the potential vote ballot. 
\end{itemize}

While our approach can cope with any subset of $\mathcal{S}$, in further examples, we will often adopt an interval-based notation for sets of scores, where an interval $[\underline{s}_x,\overline{s}^x]$ represents the range of a candidate score. This is a natural assumption that allows the neighborhood to be connected. 
\begin{example}
If $n=5$ and $|M|=3$ with $M=\{a,b,c\}$, the set-valued score $S=([2,3],[1,3],[0,2])$ means that the score of $b$ will be between $1$ (one voter will vote $b$ for sure) and $3$ (up to three voters may vote $b$). It may result from the following 5 imprecise ballots: $\{a\},\ \{a\},\ \{a,b,c\},\ \{b\},\ \{b,c\}$, with the last resulting, for example, from the partial preference $b \succ_i a$. \end{example}
Recall that whenever we use this interval representation, the sum of the scores $s = (s_1,\ldots,s_m)$ picked within $[\underline{s}_i,\overline{s}^i]$ still sums up to $n$, as it is a subset of $\mathcal{S}$.

\paragraph{Probability sets} In our proposal, uncertainty on votes is given by \emph{convex probability set}~$\mathcal{P}$ with probabilities defined over the space~$\mathcal{S}$, either derived from poll results or voters uncertainty. The next example describe a very simple situation that this model can capture but that sets or probabilities alone cannot.

\begin{example}\label{ex:complex_credal}
Assume we have two voters and two candidates\footnote{Other examples will consider more than 2.} $\{a,b\}$. Each voter state that she is more willing to vote for $a$ rather than $b$, while being unsure. This can be translated by $\mathcal{P}_i=\{p_i ~|~ p_i(\{a\}) \geq p_i(\{b\})\}$ for voter $i$. This information cannot be faithfully captured by a set nor by a precise probability.

Assuming voters are independent, these uncertain votes then induce a set of possible probabilities over $\mathcal{S}$, with $P(\{(2,0)\})=p_1(\{a\})p_2(\{a\})$, $P(\{(1,1)\})\!\!=\!\!p_1(\{a\})p_2(\{b\})+p_1(\{b\})p_2(\{a\})$, $P(\{(0,2)\})\!=\!p_1(\{b\})p_2(\{b\})$. Given the constraints on $\mathcal{P}_i$, one can then deduce that $P(\{(2,0)\})\!\geq\! P(\{(0,2)\})$ and $P(\{(1,1)\})\!\geq\! P(\{(0,2)\})$, hence that $a$ will most likely be elected, but not whether $P(\{(2,0)\})\!\geq \!P(\{(1,1)\})$, meaning that the most probable score is not known for sure. 
\end{example}

This set can be interpreted as a collection of plausible probability distributions. 
This very generic model  includes the previously cited one:
\begin{itemize}
    \item a \textbf{set} $S$ of scores, either given~\cite{meir2017iterative} or resulting from incomplete preferences~\cite{conitzer2011dominating} is captured by the set $\mathcal{P}_S=\{P: P(S)=1\},$. In particular, if $S$ is a neighborhood of the true score $s$ with respect to a distance $d$ that we leave unspecified at this stage, then we recover Meir's~\cite{meir2017iterative} representation.
\item a \textbf{probability} $p$ over $\mathcal{S}$ simply amounts to considering $\mathcal{P}=\{p\}$, i.e., the set reduces to a singleton. This means that any framework resulting in such a probability can be easily captured by our setting, as is the case for Myerson~\cite{myerson1993theory} 
\end{itemize}

Manipulating generic sets of probabilities, whether one directly gives a set $\mathcal{P}$ on $\mathcal{S}$ or builds it from individual voter uncertainties $\mathcal{P}_i$, gives us a very expressive framework to describe uncertainties, but is not very computationally friendly nor easy to represent. This is why we detail some specific models of interest~\cite{destercke2014special}, focusing on belief functions and their special cases. 

\begin{remark} Belief functions offer practical advantages, but some natural assessments need the full language of convex sets of probabilities, such as qualitative comparisons~\cite{miranda2015extreme}. Example~\ref{ex:complex_credal} is of this kind. 
\end{remark}

\paragraph{Belief functions}

A belief function over $\mathcal{S}$ consists in defining a positive mass function $\mathcal{M}:\mathcal{S} \to [0,1]$ that sums to one, i.e.,  $\sum_{S \subseteq \mathcal{S}, S \neq \emptyset }\mathcal{M}(S)=1$. From such a mass $\mathcal{M}$, we define two bounds over events $A \subseteq \mathcal{S}$ that are defined as
\begin{gather}
    \underline{P}(A)=\sum_{S \subseteq A} \mathcal{M}(S), \quad \overline{P}(A)=\sum_{S \cap A \neq \emptyset} \mathcal{M}(S)
\end{gather}
and from which one can define a corresponding set of possible probabilities
\begin{equation}
    \mathcal{P}_{\mathcal{M}}=\{P: \forall A, \underline{P}(A) \leq P(A) \leq \overline{P}(A)\}
\end{equation}

Belief functions are still quite expressive, and notably include probabilities and sets as special cases: probabilities correspond to masses bearing only on singletons, while a set $\mathcal{S}$ correspond to the mass $m(\mathcal{S})=1$.

\begin{example}
    Consider a belief function over the set of possible scores $\mathcal{S}$. For example, one can define $\mathcal{M}(\{(1,1,1),(0,2,1)\})=\mathcal{M}(([0,1],[1,2],1))=1.$
    This mass function corresponds to a set and describes two voters that are certain to vote for $b$ and $c$, while one voter is hesitating between $a$ and $b$.
    \end{example}

\begin{remark}\label{rk:independance} We can also represent each voter's ballot using a mass function $\mathcal{M}_i$ for voter $i$ defined over $M$, as we did to recover Conitzer's model~\cite{conitzer2011dominating}. 
From these, one can build a joint mass function over $M^n$ (hence over $\mathcal{S}$) : if $E_i$ are subsets of $M$ having positive masses for voter $i$, the joint mass given over $E_1 \times \ldots \times E_n$ is simply $\mathcal{M}(E_1 \times \ldots \times E_n)=\prod_{i=1}^N \mathcal{M}_i(E_i)$, which corresponds to assuming independence between agents~\cite{smets1992concept}. As each imprecise ballot can be mapped to a set of possible score vectors, we assume in the rest models on the score space, that is, mass functions $\mathcal{M} : 2^\mathcal{S} \to [0,1]$.
\end{remark}

A key concept in belief functions that we will use is the \emph{pignistic probability} $p^*$ of $\mathcal{M}$, which selects a single probability from $\mathcal{P}_{\mathcal{M}}$. It is defined as
$
p^*(s) = \sum_{S \ni s} \frac{\mathcal{M}(S)}{|S|},
$
and corresponds to the Shapley value of $\underline{P}(A)$ viewed as a game, also reflecting a Laplacian assumption within each focal set $S$. It is also a vertex gravity center of $\mathcal{P}_{\mathcal{M}}$, hence a meaningful precise representative of $\mathcal{P}_{\mathcal{M}}$, if one must be selected.
Also, if the mass $\mathcal{M}$ is a probability $p$, then $p^* = p$.  

As mentioned earlier, belief functions are quite useful to model uncertain polls~\cite{kreiss2020undecided}, especially if we allow voters to express their opinion as sets of possible candidates, just as in the work of Conitzer et al.~\cite{conitzer2011dominating} once incomplete preferences are transformed into possible ballots. 

\begin{example}
Consider the case of single-peaked preferences~\cite{black1958theory} where candidates $\{a,b,c,d,e\}$ are ordered with a left-right axis. One could then weaken the result of a poll by considering that each voter could also vote for the nearest candidates on each side (votes for $a$ become $\{a,b\}$, etc.). Imagine that an initial poll of 100 voters gives the initial score $(10,30,10,30,20)$ and 10 voters to $\{a,b\}$, leading to the belief function $\mathcal{M}([0,40],[0,50],[0,70],[0,60],[0,50])=1$. It is clear that the resulting information is quite weak, hence it would make sense to use a less severe weakening process, e.g., assuming that a voter $i$ initially voting for $b$ becomes $\mathcal{M}_i(\{b\})=\alpha$, $\mathcal{M}_i(\{a,b,c\})=1-\alpha$ 
\end{example}

Before stepping to decision, we will now present and illustrate two particular cases of belief functions of interest here, in particular for the convergence results of Section~\ref{sec:convergence_results}. 

\paragraph{Necessity measures}
A necessity measure is a belief function with mass on nested elements, i.e., $m(S_i)\neq 0$ and $m(S_j)\neq 0$ only if $S_i \subseteq S_j$ or $S_j \subseteq S_i$.

\begin{example}\label{ex:hesitating_voter}
Consider a voter $i$ providing the following information about $\{a,b,c\}$: "I will not vote for $c$, and hesitate between $a$ and $b$, with higher chances to vote for $a$". A model for such an opinion is $\mathcal{M}^{i}(\{a\})=0.5$ and $\mathcal{M}^{i}(\{a,b\})=0.5$, meaning that $[\underline{P}(\{a\}),\overline{P}(\{a\})]=[0.5,1]$ and $[\underline{P}(\{b\}),\overline{P}(\{b\})]=[0,0.5]$. 
    \end{example}



    
\begin{example}\label{ex:nested_neigh}
Consider candidates $\{a,b,c\}$ and the initial poll $s^*=(0,2,1)$ as a result on 3 voters. It is natural to consider that $s^*$ is the most plausible result, but that nearby results are plausible as well, albeit less as they are further away from  $s^*$. Building neighborhoods $S^*_r$ by considering $r$ the maximal number of voters changing their votes (so that the sum of votes remains the same), we get $S^*_1=\{s^*\} \cup \{(0,2,1)(1,1,1),(0,1,2),(1,2,0)\}$ and $S^*_2=S^*_1 \cup \{(2,0,1)(1,0,2),(0,0,3),(2,1,0)\}$. One can then consider the mass $\mathcal{M}(\{s^*\})=\alpha$, $\mathcal{M}(S^*_1)=\beta$ and $\mathcal{M}(S^*_2)=1-\alpha-\beta$ with $\alpha \geq \beta \geq (1-\alpha-\beta)$ to denote a decreasing trust of capturing the true results as we get further away from $s^*$. 
\end{example}


These examples illustrate the kind of model we will consider to extend Meir et al.'s results~\cite{meir2017iterative} convergence results in Section~\ref{sec:convergence_results}. 

\paragraph{Inner measures}

An inner measure is a belief function~\cite[Ch. 2.]{denneberg2013non}, where masses are given on a partition of the space, that is $m(S_i)\neq 0$ and $m(S_j)\neq 0$ if $S_i \cap S_j=0$ and $\cup_{S, m(S)>0} S=\mathcal{S}$ (or a subset of interest).


While necessity measures are well-suited to equip set-based approaches with gradual notions of enlargement (due to the nestedness), inner measures are well-suited to extend probabilistic models, as they rely on a partition. They may therefore provide further insights about probabilistic models. This is reflected in Figure~\ref{fig:models}. 

\begin{figure}
\centering
\begin{tikzpicture}[
    node distance=1cm,
    box/.style={rectangle, draw, minimum width=1.8cm, minimum height=0.6cm, align=center, font=\footnotesize},
    arrow/.style={-Stealth, thick}
]
\node[box] (imprecise) at (-0.5,0) {Imprecise probability (ours)};
\node[box, below=0.4cm of imprecise] (belief) {Belief functions};
\node[box, below left=0.4cm and 0.5cm of belief] (inner) {Inner measures};
\node[box, below right=0.4cm and 0.5cm of belief] (possibility) {Necessity  measures};
\node[box, below=0.4cm of possibility] (sets) {Sets\\(Meir \& Conitzer models)};
\node[box, below=0.4cm of inner] (probability) {Probability\\(Myerson's model)};
\draw[arrow] (imprecise) -- (belief);
\draw[arrow] (belief) -- (inner);
\draw[arrow] (inner) -- (probability);
\draw[arrow] (belief) -- (possibility);
\draw[arrow] (possibility) -- (sets);
\end{tikzpicture}
\caption{Overview of Our Uncertainty Model and Its Special Cases}
\label{fig:models}
\end{figure}

\subsection{Relations to existing models}\label{sec:existing_uncertainty}

\paragraph{Meir's model:} \cite{meir2017iterative} defines a neighborhood of scores with respect to a distance $d$ : $S_{r}(s)=\{s'\in \mathcal{S} \; | \; d(s,s') \leqslant r\}$, $d$ can be for instance some $\ell_p$ (\cite{meir2017iterative} takes the $\ell_1$ distance). However, Meir~\cite{meir2017iterative} requires one to fix $r$ and restrict uncertainty to a set. Our models can circumvent this limit by considering multiple $r$ values and by decreasing the plausibility of successive sets, as in Example~\ref{ex:nested_neigh}. 

\begin{example}\label{ex1}
    Consider three candidates $\{a,b,c\}$, $n=30$ the number of voters and the broadcast score being $s=(10,9,11)$.
    We will define beliefs on the neighborhood of scores directly on $\mathcal{S}$, $\mathcal{M}(S_1)=0.5$, $\mathcal{M}(S_2)=0.3$ and $\mathcal{M}(S_3)=0.2$ with $S_1 \subset S_2 \subset S_3$, hence this particular structure is also a necessity measure.
    \end{example}


This belief function, which is a necessity measure, models a certainty decreasing with respect to the distance to the broadcast poll $s$. They are ideal candidate to extend Meir's approach~\cite{meir2017iterative}. 


We may also model the same idea with a more probabilistic flavour. In this case, one can consider the partition generated by the $S_i$ sets, and associate a probability mass to each member of this partition, generating a so-called inner measure~\cite[Ch. 2.]{denneberg2013non} on the initial space $\mathcal{S}$. An interest of this model is that it is a probabilistic one, albeit defined on an algebra whose atoms are non-singleton sets of scores. 

\begin{example}\label{ex2}
    Consider three candidates $\{a,b,c\}$, $n=30$ the number of voters and the broadcast score being $s=(10,9,11)$.
    We will define uncertain information directly on $\mathcal{S}$, $\mathcal{M}(S_1)=0.5$, $\mathcal{M}(S_2\setminus S_1)=0.3$ and $\mathcal{M}(S_3\setminus S_2)=0.2$.
    \end{example}
    
\paragraph{Conitzer's model} Incomplete preferences of voters are naturally embedded in our approach, as they give rise to imprecise ballots, which can be mapped to a belief function. The next example illustrates this. 

\begin{example}\label{ex3}
  Consider three candidates $\{a,b,c\}$, $n=3$ the number of voters and the following preferences given by voters: $b \succ_1 c \succ_1 a \;;\; c \succ_2 a \succ_2 b \;;\; a \succ_3 c$
where voter 3 is sure that she prefers $a$ to $c$, but nothing else. In all completions $a \succ c \succ b$, $b \succ a \succ c$, or $a \succ b \succ c$, $c$ is never ranked first, whereas the top-ranked candidate may be either $a$ or $b$. This is naturally modelled by the belief function over $\mathcal{S}$ defined as $\mathcal{M}(\{(1,1,1),(0,2,1)\})=\mathcal{M}(([0,1],[1,2],1))=1$. 
    \end{example}


\section{Single-step decision model}
\label{sec:motivations}

We now introduce our decision model, illustrate it and link it with the previous single-step manipulation works we mentioned previously.

\subsection{The decision model}
\label{sec:decision_model}
 
From a probability set $\mathcal{P}$ on $\mathcal{S}$ and some utility $u:\mathcal{S} \to \mathbb{R}$, we define the lower expectation
\begin{equation}\label{eq:expectation}\underline{\mathbb{E}}_{\mathcal{P}}(u)=\inf_{P \in \mathcal{P}}\mathbb{E}_P(u)\end{equation}
where $\mathbb{E}_P$ is the standard expectation operator. Upper expectation $\overline{\mathbb{E}}_{\mathcal{P}}(u)$ can be defined likewise, taking a $\sup$ over $\mathcal{P}$. Lower and upper expectations form the basis of decision making with imprecise probabilities, and we will use them to model strategic decisions and manipulations. These lower and upper expectations should be interpreted from the voter's perspective as the worst and best expected scenarios regarding the voting situation, based on their perception of uncertainty and their individual preferences. 

In the case of belief functions, Equation~\eqref{eq:expectation} as well as the upper expectation can be easily solved by adopting the following formula using $\mathcal{M}$, i.e., 
\begin{gather}\label{eq:lower_expe_bel}\underline{\mathbb{E}}_{\mathcal{M}}(u)=\sum_{S \subseteq \mathcal{S}} \mathcal{M}(S) \inf_{x \in S} u(x), \quad \overline{\mathbb{E}}_{\mathcal{M}}(u)=\sum_{S \subseteq \mathcal{S}} \mathcal{M}(S) \sup_{x \in S} u(x). \end{gather}

Let us now encode the strategic behavior of voters under uncertainty. To do so, let us define a function $st(s,a_i,a'_i): \mathcal{S} \times M^2 \to \mathcal{S}$ that sends back an updated score if voter $i$ moves his vote from $a_i$ to $a'_i$. To simplify reading, we will use the shorthand $s_{a_i \to a'_i}:=st(s,a_i,a'_i)$ for the update of score $s$ after a strategic move.
Then, let $u_i: M \to \mathbb{R}$ be for each voter $i$ the associated utility that reflects how satisfied the voter is with a winner in $M$. We will stick to this simple definition depending only on $M$, as it is sufficient for our purpose. Also note that by Debreu et al.~\cite{debreu1954representation}, we can always build a utility function from the strict linear orders $\succ_i$. Therefore, to understand how voters may change their vote, we need to evaluate the benefit of moving from $a_i$ to $a'_i$ that we denote
\begin{equation}\label{eq:util_mov} u_i(a_i’|a_i,s),\end{equation}
that is the utility of moving to $a_i'$, given the initial state of affairs $s$ and the current vote $a_i$. In particular, if we are certain about our current score vector $s$, it suffices for~\eqref{eq:util_mov} to be positive for the voter to make a strategic move, that is 
$$a_i' \succeq a_i \text{ iff } u_i(a_i’|a_i,s) \geq 0$$
where $\succeq$ denotes here that action of voting for $a'_i$ is preferable to voting for $a_i$. 

However, in our model our agents receive a probability set $\mathcal{P}$ over $\mathcal{S}$, and we will define four decision criteria (denoted $DC$) under $\mathcal{P}$.

\begin{itemize}
    \item The first criterion that we will consider is a pessimistic criterion to decide whether an action is better than another as follows:
\[a_i' \succ_{pess} a_i \text{ iff } \underline{\mathbb{E}}_{\mathcal{P}}(u_i(a_i’|a_i,\cdot)) \geq 0 \text{ and } \overline{\mathbb{E}}_{\mathcal{P}}(u_i(a_i’|a_i,\cdot)) > 0\]
under uncertainty $\mathcal{P}$. This corresponds to the maximin of Gilboa et al.~\cite{gilboa1989maxmin}.
    \item A pignistic criterion to decide whether an action is better than another:
    $$a_i' \succeq_{pig} a_i \text{ iff } \mathbb{E}_{p^*}(u_i(a_i’|a_i,\cdot)) \geq 0,$$
    where $p^*$ is the pignistic probability. It includes probabilistic decision as a special case.
   \item A mixture criteria to decide whether an action is better than another:
   \[a_i' \succeq_{\alpha,p^*} a_i \text{ iff }  \alpha \cdot \underline{\mathbb{E}}_{\mathcal{M}}(u_i(a_i’|a_i,\cdot))+ (1-\alpha) \cdot \mathbb{E}_{p^*}(u_i(a_i’|a_i,\cdot)) \geq 0\] 
   with $\alpha \in [0,1]$. $\alpha$ can be seen as a level of completeness of the obtained preferences between possible moves, as it allows one to go from a rather partial order ($\alpha=0$) to a complete one ($\alpha=1$).
   \item Another mixture of criteria:
      \[a_i' \succeq_{H(\alpha)} a_i \text{ iff } \alpha \cdot \underline{\mathbb{E}}_{\mathcal{M}}(u_i(a_i’|a_i,\cdot))
   + (1-\alpha) \cdot \overline{\mathbb{E}}_{\mathcal{M}}(u_i(a_i’|a_i,\cdot)) \geq 0\] 
   with $\alpha \in [0,1]$. This corresponds to the well-known Hurwicz criterion~\cite{hurwicz1951optimality} that intends to balance between optimism ($\overline{\mathbb{E}}$) and pessimism ($\underline{\mathbb{E}}$), and that has been justified within a belief function setting by Denoeux et al~\cite{denoeux2020interval}.
        \end{itemize}

Note that for the three last rules, the strict relation $\succ$ corresponds to strict positive values on the right side. Those decision rules are akin to commonly used decision rules within the imprecise probabilistic setting~\cite{troffaes2007decision}. 

        

To sum up, we describe an election under plurality with a lexicographic tie breaking and strategic voting under uncertainties with the tuple $(N,M,\succ,\rhd,\mathcal{P}, (u_i)_{i\in N}, DC)$. Let us  give an example putting our framework at work, where the considered uncertainty is neither a set nor a probability. 

\begin{example}\label{ex4}
    Consider candidates $\{a,b,c\}$ and $n = 3$, with voter 1 as in Example~\ref{ex:hesitating_voter}, while the second and the third voters have certain ballots $b$ and $c$. In this case, the belief function over $\mathcal{S}$ is defined as $\mathcal{M}(\{(1,1,1)\})=0.5,\mathcal{M}(([0,1],[1,2],1))=0.5$. Let us now illustrate one of our decision rules. Consider for instance that the preference of the second voter is given by $b \succ_2 c \succ_2 a$, and a Hurwicz criterion with $\alpha=\frac{1}{3}$, meaning that the weight of optimism is larger. We will consider the following utilities
    \begin{equation}\label{eq:utility_Meir}\forall i \in N, u_i(a_i’|a_i,s) =
    \begin{cases} 
        1 & \text{if}~ \mathcal{W}(s_{a_i \to a'_i}) \succ_i \mathcal{W}(s), \\
        0  & \text{if}~ \mathcal{W}(s_{a_i \to a'_i}) \sim_i \mathcal{W}(s), \\
        -1 & \text{otherwise}. \\
    \end{cases}\end{equation}
    We want to evaluate the deviation from $b$ to $c$ for voter 2, where a deviation simply means a change of ballot. Then, using Eq.~\eqref{eq:lower_expe_bel}, we compute \[\underline{\mathbb{E}}_{\mathcal{M}}(u_2(a_2’|a_2,\cdot)) = \frac{1}{2} \cdot 1 + \frac{1}{2} \cdot (-1)= 0 \;; \;\overline{\mathbb{E}}_{\mathcal{M}}(u_2(a_2’|a_2,\cdot))=\frac{1}{2} \cdot 1 + \frac{1}{2} \cdot 1 = 1.\]
For the precise set $\{(1,1,1)\}$ with mass $0.5$, the move makes $c$ elected instead of $a$, and the utility is a constant $1$ (there is no uncertainty on set $\{(1,1,1)\}$). For the set $([0,1],[1,2],1)$ that also has mass $0.5$, the move is still beneficial for $(1,1,1)$, but would bring negative utility for $(0,2,1)$, as $c$ would be elected instead of $b$, a clear given $b \succ_2 c \succ_2 a$. Thus, we get: \[ \frac{1}{3} \cdot \underline{\mathbb{E}}_{\mathcal{M}}(u_2(a_2’|a_2,\cdot))+ (1-\frac{1}{3}) \cdot \overline{\mathbb{E}}_{\mathcal{M}}(u_2(a_2’|a_2,\cdot))=\frac{1}{3} \cdot 0 + \frac{2}{3} \cdot 1= \frac{2}{3} \geq 0,\]
 concluding that voter 2 will do the deviation from $b$ to $c$ with such a criterion.
    \end{example}

This results in a rich uncertainty modeling and decision framework, in which one can easily separate what comes from the uncertainty model, and what comes from the decision rule. Let us now see how our approaches connect with the existing literature. 



\subsection{Relations with Existing Models}

We show that the models from Meir et al.~\cite{meir2010convergence,meir2017iterative}, Myerson et al.~\cite{myerson1993theory} and Conitzer et al.~\cite{conitzer2011dominating} are specific case of our approach, demonstrating its expressiveness. 


For the first, we consider the model $S_r(s)$ mentioned in Section~\ref{sec:existing_uncertainty}. By abuse of notation, we will use $S_{r}$ when it is clear from the context. Meir~\cite{meir2010convergence} first worked on the special case $r=0$, meaning the poll information is complete for plurality and certain. We retrieve Meir's model ~\cite{meir2010convergence} by considering voters $i \in N, \mathcal{M}_i(S_0)=1$ and the particular utility functions:
\[\forall i \in N, u_i(a_i’|a_i,s)=
\begin{cases} 
  1 & \text{if}~ a'_i=\mathcal{W}(s_{a_i \to a'_i}) \succ_i \mathcal{W}(s),  \\
  0  & \text{if}~ a'_i\neq\mathcal{W}(s_{a_i \to a'_i}) \succ_i \mathcal{W}(s) \text{ or } \mathcal{W}(s_{a_i \to a'_i}) \sim_i \mathcal{W}(s), \\
  -1 & \text{otherwise}, \\
\end{cases}
\]
and the pessimistic decision rule $\succeq_{pess}$. Recall that in~\cite{meir2017iterative}, convergence is  guaranteed and we call these strategic moves ``direct best response" since all strategic moves where a voter deviates are to the new winner.

Meir~\cite{meir2014local} then extended this framework to add uncertainty by considering strictly positive value $r_i>0$ for voter $i$. This comes down in our case to take, $\forall i \in N, \mathcal{M}_i(S_{r_i})=1$, the utility functions given by Equation~\eqref{eq:utility_Meir} and again the pessimistic decision rule $\succeq_{pess}$ to recover Meir~\cite{meir2014local} proposal.

For the probabilistic model of Myerson et al.~\cite{myerson1993theory}, they derive from observed votes a weight vector $q=(q_1,\ldots,q_m)$, and use a multinomial distribution to obtain the probability over score vectors $\mathds{P}(s)$.
The utilities can be chosen arbitrarily provided they satisfy the voter’s preferences $u_i(a_i’|a_i,s) \geq 0  ~\text{iff}~\mathcal{W}(s_{a_i \to a'_i}) \succ_i \mathcal{W}(s).$
Since in the probabilistic case all decision rules we have described in Section~\ref{sec:decision_model} reduce to a classical expectation operator, there is no need to specify a specific behaviour. 

The last model we want to encompass is the one from \cite{conitzer2011dominating} with incomplete preferences. As we already showed in Section~\ref{sec:model}, the incompleteness on each voter's preference can easily be summarized as an uncertainty set on scores that we call $S$ and where $\mathcal{M}(S)=1$. It remains to choose the decision rule with respect to this set. Conitzer et al.\cite{conitzer2011dominating} adopt a pessimistic decision criterion, which amounts to use the utilities of~\eqref{eq:utility_Meir}
and the pessimistic decision rule $\succeq_{pess}$.

Now that we have a fully fledged decision framework allowing us to model whether or not a voter will proceed to a strategic move or manipulation, we study the multi-step case, in particular showing that classical convergence results can be reinterpreted and extended to specific instantiations of our framework.

\section{Multi-step Manipulations}
\label{sec:convergence_results}

In this section we generalize existing results on convergence from Meir~\cite{meir2017iterative} by adding quantitative uncertainty about the score as in Example~\ref{ex1} and Example~\ref{ex2}. Let us recall that a neighborhood of scores $S_r$ with respect to a distance $d$ is defined as in Meir's~\cite{meir2017iterative}: $S_{r}(s)=\{s'\in \mathcal{S} \; | \; d(s,s') \leqslant r\}$, where $d$ can be $\ell_1, \ell_{\infty}$, \ldots as in Meir's work~\cite{meir2017iterative}. Let us consider the $\ell_1$ distance, which is easy to interpret in terms of adding or removing votes, without necessarily preserving the total number of votes, which may vary by one. This behavior can be easily explained by considering a voter who chooses to abstain. For each voter $i$, we will let $r_i$ be the support of their uncertainty, so that $S_{r_i}$ is the biggest set of voter $i$'s uncertainty. We now formalize the notion of equilibrium in this model through the following definition.

\begin{definition}
    An equilibrium is a situation where no voter has an incentive to deviate from its ballot, i.e. $\forall i \in N, \nexists~ a_i' ~\text{such that}~ a_i' \succ_{DC} a_i$. We assume voters will only make a strategic move if their preferences are strict.
\end{definition}

\subsection{Extending Meir Framework}
Let us consider a belief function for voter $i$, $\mathcal{M}^i:\mathcal{S} \to [0,1]$ that sums to one, i.e.,  $\sum_{S \subseteq S_{r_i} , S \neq \emptyset }m^i(S)=1$. This notation for belief functions indexed by voters should not be confused with that of Remark~\ref{rk:independance}, where the question was to model each ballot independently, while here we consider that the scores come from a unique poll, but that each voter may be more or less skeptical about its result.
We will consider two particular cases of interest. The first is the case of nested sets, i.e. $\forall i \in N, \forall k \geqslant 1, m^i(S_k)=\beta_k^i$, with $\sum_{k=1}^{r_i} \beta_k^i=1$. Indeed, we might want to describe the fact that the belief mass depends on the distance to the true score. Second, we look at the case of \emph{partitioned} belief function, i.e. $\forall i \in N, \forall k \geqslant 2, m^i(S_k \setminus S_{k-1} )=\beta_k^i$, with $\sum_{k=1}^{r_i} \beta_k^i=1$ and $\forall i \in N, m^i(S_1)=\beta_1^i$.
In both cases, we will assume decreasing $(\beta_k^i)_{1\leqslant k \leqslant r_i}$, meaning that our evidence decreases as we get further from the observed $s$. For the rest of this section, we consider utilities given by Equation~\eqref{eq:utility_Meir}.


\begin{restatable}{theorem}{cvgone}\label{cvgone}
    Voters considering uncertainty given by a nested or a partitioned decreasing belief function, and making strategic decisions according to either pessimistic ($\succeq_{pess}$) or mixed ($\succeq°_{\alpha,p^*}, \succeq_{H(\alpha)}$) decision rules with $\alpha$ large enough, will converge to an equilibrium.
\end{restatable}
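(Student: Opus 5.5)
The plan is to reduce the theorem to Meir's convergence result for local-dominance voters \cite{meir2017iterative,meir2014local}. There, a voter with uncertainty set $S_{r_i}(s)$ moves from $a_i$ to $a_i'$ only if $a_i'$ \emph{locally dominates} $a_i$: for every $s'\in S_{r_i}(s)$ the utility $u_i(a_i'|a_i,s')$ of Equation~\eqref{eq:utility_Meir} is nonnegative, and it is strictly positive for at least one $s'$. Meir shows that Plurality with lexicographic tie-breaking then converges from the truthful profile, for any order of moves and any radii $r_i$. A potential argument underlies this: after the first move, the set of candidates that can still receive votes only shrinks, and every move goes to a candidate that is ``possible'' in the neighbourhood. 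So it suffices to show that, under the stated belief functions and decision rules, every strict improvement step is also a local-dominance step for voter $i$ with some radius $\rho_i\le r_i$. If the radius can depend on the step, I would instead check that Meir's argument only uses, for each move, that the move is a local-dominance move for \emph{some} neighbourhood containing $s$ and contained in $S_{r_i}(s)$.

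\textbf{Pessimistic rule.} I would use Equation~\eqref{eq:lower_expe_bel}. In the nested case, $\underline{\mathbb{E}}_{\mathcal{M}^i}(u)=\sum_k\beta_k^i\min_{S_k}u$. Since the $S_k$ are nested, these minima are nonincreasing in $k$. In the partitioned case, the minima are taken over disjoint rings. Because $u\in\{-1,0,1\}$, the condition $\underline{\mathbb{E}}\ge 0$ together with $\overline{\mathbb{E}}>0$ should force $\min u\ge0$ on every focal set carrying positive mass. The point needing care is that a $-1$ on one set could in principle be offset by a $+1$ elsewhere. I would rule this out with a structural fact about Plurality moves from $s$: if the move is harmful at some $s'\in S_k$, then a nearby score in the same or a smaller focal set exists where it is also non-beneficial. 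The decreasing $\beta$'s then make the negative terms dominate, since they sit on the heavier, inner sets. Once every focal set has $\min u\ge 0$, the move is a local-dominance move with respect to $S_{r_i}$, or with respect to the largest focal set with positive mass.

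\textbf{Mixed rules.} For $\succeq_{\alpha,p^*}$ and $\succeq_{H(\alpha)}$, the criterion is $\alpha\underline{\mathbb{E}}+(1-\alpha)X$ with $X\le 1$. If some focal set has $\min u=-1$, the lower expectation is at most a negative quantity $-\delta_i$. Here $\delta_i$ depends only on the finite data $(\beta_k^i)$, since $u$ takes values in a finite set and there are finitely many focal configurations. Choosing $\alpha>1/(1+\delta_i)$ makes the criterion negative, so any accepted move must satisfy $\underline{\mathbb{E}}\ge 0$. We then fall back to the pessimistic case, and the threshold can be taken uniformly over the finitely many voters. This is what ``$\alpha$ large enough'' should mean.

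\textbf{Main obstacle.} The main obstacle is the pessimistic step in the partitioned case. There, lower expectations on the rings are not monotone, so ``$\underline{\mathbb{E}}\ge0$'' does not trivially imply local dominance on all of $S_{r_i}$. If the combinatorial lemma fails, my fallback is to verify directly that Meir's potential argument survives. That argument requires that voters never move to a candidate that cannot win within their neighbourhood, and that the set of potential winners is monotone. I expect both properties to follow from the weaker condition that the move is weakly beneficial on the innermost focal set $S_1$, which has the largest weight, plus strictly beneficial somewhere.
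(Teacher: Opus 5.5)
\textbf{There is a genuine gap in your pessimistic step, and the mixed-rule step rests on it.}

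You correctly flag the danger: a $-1$ on one focal set could be offset by another focal set on which $u\equiv 1$. Your remedy does not rule this out, for two reasons.
\begin{enumerate}
\item Your structural lemma is vacuous as stated. It is satisfied by $s'$ itself, which lies in ``the same'' focal set. Even in a strictly smaller set it only gives a \emph{non-beneficial} point, not a harmful one, so it cannot move negative terms inward.
\item The weight argument runs the wrong way. As you note yourself, in the nested case $\min_{S_k}u$ is nonincreasing in $k$. So $-1$'s propagate outward to the lighter sets, and any uniformly beneficial set would have to be an inner, heavier one. Decreasing $\beta$'s therefore help the offset rather than prevent it. For example, take $\beta^i=(\frac12,\frac14,\frac14)$, $u\equiv 1$ on $S_1$, and a $-1$ in $S_2$. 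Then $\underline{\mathbb{E}}=\frac12-\frac14-\frac14=0$ and $\overline{\mathbb{E}}>0$, so the move is accepted, yet it is not a local-dominance move.
\end{enumerate}
Your bound $\underline{\mathbb{E}}\le-\delta_i$ for the mixed rules presupposes exactly this no-offset property, so it inherits the gap. Once that property is established, your explicit threshold $\alpha>1/(1+\delta_i)$ is a correct way to make the paper's ``$\alpha$ large enough'' precise.

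\textbf{The missing idea, which is the crux of the paper's proof, is that no focal set can be uniformly beneficial.} Each focal set contains a score at which voter $i$ is not pivotal, so $\min u\le 0$ on every focal set.
\begin{itemize}
\item For $S_1$: let $w=\mathcal{W}(s)$ and let $e_w$ be the unit vector of $w$. If $a_i\neq w$, the score $s+e_w$ works, since one extra vote for the current winner kills any pivot toward $a_i'$. A similar one-vote perturbation ($s+e_w$ or $s+e_{a_i'}$) handles the case $a_i=w$.
\item Every nested $S_k\supseteq S_1$ inherits this point.
\item In the partitioned model, each ring $S_k\setminus S_{k-1}$ with $k\ge 2$ contains $s+ke_w$, which is non-pivotal.
\end{itemize}
Hence $\underline{\mathbb{E}}_{\mathcal{M}^i}=\sum_k\beta_k^i\min u$ is a sum of nonpositive terms. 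The condition $\underline{\mathbb{E}}\ge 0$ then forces every term to vanish, i.e.\ $u\ge 0$ on the whole support. Together with $\overline{\mathbb{E}}>0$, this is local dominance for the fixed radius $r_i$. So Meir's result applies directly, with no step-dependent radii, and the decreasing weights play no role at all.

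This also dissolves your ``main obstacle'': the partitioned case is the identical argument. You therefore do not need your fallback (weak benefit on $S_1$ plus strict benefit somewhere). That condition is not a local-dominance condition for any neighbourhood, and Meir's convergence argument is not available for it.
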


\begin{proof}
    At first, we will take $\forall i \in N, \forall k \geqslant 1, \mathcal{M}(S_k)=\beta_k^i$, with $\sum_{k=1}^{r_i} \beta_k^i=1$. We consider a feasible move $a_i \to a_{i'}$.
    We remark that the hypothesis that $\underline{\mathbb{E}}_{\mathcal{M}}(u_i(a_i’|a_i,s))\geqslant 0$ implies that either:
    \begin{itemize}
        \item $\forall s \in S_{r_i}, u_i(a_i’|a_i,s) \neq -1,$
        \item $\exists \tilde{r} \in [0,r_i], \forall s \in S_{\tilde{r}}, u_i(a_i’|a_i,s) = 1 $
    \end{itemize}
    since, if some neighborhoods have a negative utility, this must be compensated by a positive contribution, which happens only if all moves within a neighborhood are positive. However, the second case is impossible, for the reason that if we allow a voter to transfer its vote to another candidate ($r=1$), then there is a situation for which the voter is not pivotal, and therefore $\exists s \in S_{1}, u_i(a_i’|a_i,s) = 0$, showing that the second case never happens. Therefore, if $\underline{\mathbb{E}}_{\mathcal{M}}(u_i(a_i’|a_i,s))\geqslant 0$, this means that $\forall s \in S_{r_i}, u_i(a_i’|a_i,s) \in \{ 0,1\}$, with at least one $s$ giving the null value.
    
    
    Second, note that we cannot have $\overline{\mathbb{E}}_{\mathcal{M}}(u_i(a_i’|a_i,s))>0$ if $\forall s \in S_{r_i}, u_i(a_i’|a_i,s)= 0$, and there must be a situation for which making this strategic move is a local dominance move, meaning it verifies Theorem 4 from Meir~\cite{meir2017iterative}.
    
    We do exactly the same reasoning for the second type of belief functions, i.e., $\forall i \in N, \forall k \geqslant 2, \mathcal{M}_i(S_k \setminus S_{k-1} )=\beta_k^i$, with $\sum_{k=1}^{r_i} \beta_k^i=1$ and $\forall i \in N, \mathcal{M}_i(S_1)=\beta_1^i$. For the mixed decision, one can prove that the decision behaves as the pessimistic one if $\alpha$ is large enough.
    \end{proof}

We now want to go a step further by showing the convergence can hold even with a less, but still pessimistic behavior, namely the Hurwicz criterion with $\alpha>\frac{1}{2}$. This allows us to consider negative outcomes in the uncertainty neighborhood of scores. Let us consider belief functions as follows: $\forall i \in N, \forall k \geqslant 1, m^i(S_k)=\beta_k^i$, with $\sum_{k=1}^{r_i} \beta_k^i=1$ 

\begin{restatable}{theorem}{cvgtwo}\label{theo:cvgtwo}
    Voters considering a decreasing belief function 
    around the true score and making strategic votes according to a sufficiently pessimistic Hurwicz criterion  (i.e. $\alpha>\frac{1}{2}$) will converge to an equilibrium.
\end{restatable}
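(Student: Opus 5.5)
We reduce Theorem~\ref{theo:cvgtwo} to the local-dominance convergence result of Meir~\cite{meir2017iterative} (its Theorem 4), which was already the endpoint of the proof of Theorem~\ref{cvgone}. It suffices to show that every move accepted by the Hurwicz rule with $\alpha>\frac12$ is a local dominance move with respect to some neighborhood $S_{\tilde r}$, $\tilde r\le r_i$. The strategy is to translate the Hurwicz condition into a per-neighborhood condition. We write $\underline{u}_k=\min_{s\in S_k}u_i(a_i'|a_i,s)$ and $\overline{u}_k=\max_{s\in S_k}u_i(a_i'|a_i,s)$, both in $\{-1,0,1\}$. By Eq.~\eqref{eq:lower_expe_bel}, the criterion reads
\[\sum_{k=1}^{r_i}\beta_k^i\bigl(\alpha\,\underline{u}_k+(1-\alpha)\,\overline{u}_k\bigr)>0 .\]
Since the sets are nested, $\underline{u}_k$ is nonincreasing and $\overline{u}_k$ is nondecreasing in $k$.

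The first step reuses the non-pivotality observation from the proof of Theorem~\ref{cvgone}. As soon as $k\ge 1$, some $s\in S_k$ leaves the voter non-pivotal, so $\underline{u}_k\le 0$ and $\overline{u}_k\ge 0$. Each term $\alpha\underline{u}_k+(1-\alpha)\overline{u}_k$ therefore takes one of four values:
\begin{itemize}
\item $0$ when $\overline{u}_k=0$;
\item $1-\alpha>0$ when $(\underline{u}_k,\overline{u}_k)=(0,1)$;
\item $-\alpha<0$ when $(\underline{u}_k,\overline{u}_k)=(-1,0)$;
\item $1-2\alpha<0$ when $(\underline{u}_k,\overline{u}_k)=(-1,1)$.
\end{itemize}
Here $\alpha>\frac12$ is used. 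Only neighborhoods with no harmful outcome and at least one beneficial outcome contribute positively.

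The second step uses monotonicity along the nested chain together with the decreasing weights. Let $k_0$ be the smallest index with $\underline{u}_k=-1$. For all $k\ge k_0$ the contribution is at most $1-2\alpha<0$. For $k<k_0$ it is at most $1-\alpha$, and it is positive only if $\overline{u}_k=1$. If the total is positive, some $k<k_0$ must therefore have $(\underline{u}_k,\overline{u}_k)=(0,1)$. In $S_k$ the move is then never harmful and sometimes beneficial, which is exactly a local dominance move relative to $S_k$ in the sense of Meir. The voter therefore behaves at that step as a Meir voter with radius $\tilde r=k$. The decreasing weights $\beta_k^i$ matter only for making the positive small-$k$ part plausible. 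For the convergence argument we only need the existence of such a $k$, which already holds without the weights.

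The main obstacle is the last step, because the effective radius $\tilde r$ can differ from move to move and from voter to voter, whereas Meir's Theorem 4 is stated for fixed (possibly heterogeneous) radii. I would first check that Meir's potential argument carries over to moves that are local dominance moves with respect to some radius at most $r_i$, possibly varying over time. That argument tracks the score of the new winner and of the abandoned candidate under $\ell_1$ neighborhoods, and relies on the fact that a dominance move goes to a candidate that is a possible winner and away from one that is not better. If that fails, I would strengthen the second step: using the fact that $\overline{u}_k$ increases with $k$, I would argue that the move is also weakly undominated for the largest $k<k_0$, and then fix $\tilde r=k_0-1$. The remaining gap would be to show that $k_0$ is determined by the voter's preferences and the current score in a way compatible with Meir's proof.
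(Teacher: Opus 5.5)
Up to the last step, your argument is the paper's proof. The paper also does the following:
\begin{itemize}
\item it splits the Hurwicz value into per-focal-set terms $\alpha\inf+(1-\alpha)\sup$;
\item it rules out the all-$1$ and all-$(-1)$ cases by non-pivotality;
\item it obtains the values $1-\alpha$, $-\alpha$, $1-2\alpha$, $0$ (Cases A--D);
\item it concludes that a positive criterion forces some $S_{r_i'}$ in Case A, i.e.\ a local-dominance move with respect to $S_{r_i'}$.
\end{itemize}
Your use of nestedness is a small sharpening the paper does not state: harmless radii form an initial segment and beneficial radii a final one, so a given move is local-dominance exactly for $k_1\le k\le k_0-1$, where $k_1$ is the first index with $\overline{u}_k=1$. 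You are right that the decreasing weights play no role in the convergence part. Your first bullet should read $(\underline{u}_k,\overline{u}_k)=(0,0)$, since $(-1,0)$ also has $\overline{u}_k=0$.

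The step you flag as the main obstacle is a genuine gap, and the paper's own proof does not resolve it either. The paper invokes Section VIII of \cite{meir2015plurality}, which covers heterogeneous radii $r_i$ and arbitrary (non-truthful) initial states. It then simply asserts that one can ``do a bijection of our strategic moves''.

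Each accepted move is a local-dominance move for \emph{some} radius, but not for a radius fixed per voter. When $k_0\le r_i$, the accepted move is harmful somewhere in $S_{r_i}$, so it is not a local-dominance move for the voter's own radius $r_i$. Moreover, the interval $\{k_1,\dots,k_0-1\}$ depends on $(s,a_i,a_i')$. The same voter may thus act with radius $2$ at one step and radius $5$ at another, and in general no single radius per voter reproduces all accepted moves. Your fallback $\tilde r=k_0-1$ does not change this.

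The ``any starting point'' clause does not rescue it either. Restarting Meir's dynamics whenever the effective radii change yields convergence only if the radii eventually stabilise, and nothing prevents them from switching infinitely often.

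To close the proof you would have to open Meir's convergence argument and check that it rests on a potential defined on the state alone, which every local-dominance move strictly improves whatever the mover's radius. If so, arbitrary time-varying radii $\le r_i$ are covered. Otherwise a new argument, or a counterexample, is needed.
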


\begin{proof}[sketch]
   Hurwicz can be rewritten as  
   $\sum_{i=1}^{r_i}\mathcal{M}(S_i) [\alpha \inf_{s\in S} (u_i(a_i’|a_i,s)) + (1-\alpha)\cdot \sup_{s\in S} (u_i(a_i’|a_i,s)) ]$
   and we show that if $\alpha >\frac{1}{2}$, such an equation is positive only if we are in averaging over situations that are similar from section VIII in Meir~\cite{meir2015plurality} which tells us that the convergence holds for any level of uncertainty $r_i$ and any starting point (even non-truthful states) for local dominance strategic behaviors.
\end{proof}

\subsection{The Case of Pignistic Probability}    

Let us now consider the case of a pignistic probability on a neighborhood of scores $S_{r_i}$, which is equivalent to having a uniform distribution $\mathcal{U}$ over all single scores within $S_{r_i}$ for any voter $i$. If $S_{r_i}$ is our single set with positive mass, the corresponding strategic behaviour is equivalent having the following criterion in the strict case: 
\[a_i' \succ_{pig} a_i \text{ iff }  \operatorname{card}(s\in S_{r_i} ~|~ u_i(a_i’|a_i,s)=1)- \operatorname{card}(s\in S_{r_i} ~|~ u_i(a_i’|a_i,s)=-1) > 0\]
where $\operatorname{card}$ is the cardinality of the set.


\begin{restatable}{proposition}{counterexemple}\label{prop:counterexemple}
    With a uniform pignistic criterion on a neighborhood of scores of size 1 (with respect to the $\ell_1$ distance), convergence is not guaranteed.
\end{restatable}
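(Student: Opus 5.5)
The proposition is a non-convergence claim, so I will prove it by exhibiting an explicit instance together with a cycle of improving moves. The instance consists of a small number of candidates (three or four), a small number of voters $n$, a lexicographic tie-breaking order $\rhd$, linear preferences $\succ_i$, and the utilities of Equation~\eqref{eq:utility_Meir}. Every voter uses the pignistic criterion on $S_1(s)$ with the $\ell_1$ distance, so a move $a_i \to a'_i$ is made exactly when the number of scores $s' \in S_1(s)$ with $u_i(a'_i|a_i,s')=1$ strictly exceeds the number with $u_i(a'_i|a_i,s')=-1$. I will then give a ballot profile $b^0$ and a sequence of deviations $b^0 \to b^1 \to \dots \to b^k = b^0$. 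Each deviation must be a strict improvement under this counting criterion, evaluated at the current broadcast score. Such a cycle shows that some improvement path never reaches an equilibrium, which is what ``convergence is not guaranteed'' means, in the same sense as Meir's non-convergence examples.

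The construction should exploit the feature that separates the pignistic rule from local dominance. A voter may move even when some scores in $S_1(s)$ make the move harmful, provided more scores make it helpful. Under the $\ell_1$ distance, $S_1(s)$ contains $s$ and the vectors obtained by adding or removing one vote for a single candidate, since the total may vary by one. The plan is to take a near-tied score, where many neighbours make a voter pivotal in her favour and few make her pivotal against her. For example, a voter who currently supports a hopeless candidate $c$ should compromise toward her second choice when that choice is narrowly behind the leader. I will then choose preferences so that after a few such compromises the leadership rotates in a Condorcet-like cycle. The intended structure is three groups with cyclic preferences $a\succ b\succ c$, $b\succ c\succ a$, $c\succ a\succ b$, and two voters who alternately abandon and return to their top choice. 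A move back to the top choice must also count as an improvement: once the compromise candidate is safely winning, returning to one's favourite is harmless on most neighbours and helpful on a few.

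The mechanical steps are as follows. First, I fix concrete numbers. Second, I list $S_1(s)$ for each of the $k$ scores in the cycle; these contain about $2m+1$ vectors, a quantity stated loosely since I am not certain of the exact count. Third, for the deviating voter I compute $\mathcal{W}$ before and after the move on each neighbour, being careful with ties and $\rhd$. Fourth, I count the $+1$ and $-1$ outcomes and check that the difference is strictly positive.

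I expect the main obstacle to be closing the cycle. A voter returning to her favourite is exactly a move that local-dominance rules forbid, and Theorem~\ref{theo:cvgtwo} indicates that this kind of backward move drives convergence. So I must arrange a score where the return move has strictly more helpful than harmful neighbours. This essentially requires her favourite to be within one vote of the winner in some neighbours and never pivotal against her in the others. My first attempt is to let tie-breaking favour the returning voter's top candidate, so that one added vote in a neighbour creates a win. If three candidates do not suffice, I will move to four candidates with a small brute-force search over profiles of at most six voters to locate the cycle.
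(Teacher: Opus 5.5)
Your strategy is the same as the paper's: exhibit a profile and a cycle of strict pignistic improvements. The shape you anticipate, two voters who alternately leave their favourite for a compromise and then return, is exactly the paper's cycle (ten voters, four candidates, start at $(2,2,3,3)$; voter 8 moves $d\to a$, voter 2 moves $c\to a$, then both move back). The genuine gap is that the proposal stops where the proof should begin. You give no preferences, tie-breaking order, starting ballots or neighbour counts. The fallback, a brute-force search over at most six voters, is a plan with no guarantee that an instance of that size exists. For a non-convergence statement the explicit instance together with the neighbour-by-neighbour check \emph{is} the proof, so as written nothing has been established. (Incidentally, $|S_1(s)|=2m+1$ exactly when every coordinate of $s$ is positive, and is smaller otherwise.)

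It helps to see precisely what the search must produce. Write $e_z$ for the unit vector of candidate $z$, and $F_i(s)$ for the number of $+1$ neighbours minus the number of $-1$ neighbours of a fixed move of voter $i$, evaluated at $s$. We have $u_i(x|y,t-e_x+e_y)=-u_i(y|x,t)$, and $S_1(s-e_x+e_y)=S_1(s)-e_x+e_y$ whenever the coordinates involved stay positive. So, away from the boundary, a reverse move has exactly the opposite count on the translated neighbourhood. Hence a voter can never profitably undo her own move unless someone else has moved in between. For a four-move cycle of the kind you describe (voter $A$: $x\to y$, voter $B$: $p\to q$, then the two reversals), you need $F_A(s^0)>0>F_A(s^0-e_p+e_q)$ and $F_B(s^0)<0<F_B(s^0-e_x+e_y)$. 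In words, each voter's single vote must flip the sign of the other's incentive. Nothing in your heuristic of returning once the compromise is safely winning shows that both flips can be achieved at once, and the counts are genuinely delicate.

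Recounting the paper's own instance illustrates this:
\begin{itemize}
\item Voter 8's first move scores $3-3=0$ under $a\rhd b\rhd c\rhd d$; the state $(2,3,3,3)$ is listed among both the improved and the deteriorated ones.
\item At the next state $(3,2,3,2)$, voter 2's move $c\to a$ is harmful at $(2,2,3,2)$ and $(3,2,4,2)$ and can help only at $(3,3,3,2)$. So it is not an improvement under any tie-breaking compatible with the first step, which needs $a\rhd c\rhd d$.
\end{itemize}
So you cannot simply borrow that instance. You must actually find one, and a machine search with exhaustive checking of ties and boundary neighbours is the sensible route. The proof must then include that instance and its verification.
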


\begin{proof}[sketch]
If we consider the preferences
\begin{gather*}
a \succ_1 b \succ_1 c \succ_1 d \; ; \; c \succ_2 a \succ_2 b \succ_2 d \; ; \;  c \succ_3 a \succ_3 b \succ_3 d  \\
c \succ_4 d \succ_4 a \succ_4 b \; ; \; a \succ_5 c \succ_5 d \succ_5 b \; ; \;  d \succ_6 b \succ_6 a \succ_6 c  \\
c \succ_7 b \succ_7 d \succ_7 a \; ; \; d \succ_8 b \succ_8 a \succ_8 c \; ; \;  b \succ_9 d \succ_9 c \succ_9 a  \\
b \succ_{10} d \succ_{10} c \succ_{10} a 
\end{gather*}
of those ten voters over four candidates, we can show that Voter $8$ will move from $d$ to $a$ with a cardinal difference of 1, voter $2$ will move from $c$ to $a$ with a cardinal difference of 2, voter $8$ will move from $a$ to $d$ with a cardinal difference of 1 and finally voter $2$ will move from $a$ to $c$ with a cardinal difference of 3. This creates a cycle, which prevents convergence.
\end{proof}

This example is to be put in perspective with our result of Theorem~\ref{cvgone} about partitioned belief functions, which indicates that with the right (coarse) algebra, it is possible to consider a probability measure and a corresponding decision rule such that convergence holds, meaning that the counter-example is not so much about having a probabilistic model itself than about the voters being too optimistic in their movement, hinting also at the fact that "optimistic" decision rules 
are unlikely to lead to convergence of voting behaviors. Note that the distance here is the $\ell_1$ distance, and that the number of voters does not sum to $n$. However, this can easily be interpreted as meaning that one voter does not participate, and is therefore still very reasonable. Moreover, similar examples can be found for other distances, including distances for which the number of voters sums to $n$. 

\section{Conclusion and Future Works}
\label{sec:discussion}
In this paper, we provide new tools to model uncertainty in strategic voting. We think that our work provides substantially new views on preference modeling in voting theory, and in particular in strategic voting models. In particular, we have shown that they capture standard settings in a unifying framework, and allow one to represent new uncertainty scenarios that were not captured by previous models. We have illustrated, through many examples, that this model can account for uncertainty arising both from polls and from the voters themselves. Moreover, we establish new convergence results in our framework that generalize existing ones, and we provide a key counterexample that points out the limits of these convergence results.

This work opens up several promising research directions, both in strategic voting models and in preference modeling in voting theory more broadly. A first important question is how incorporating a more realistic model of uncertainty may influence election outcomes compared to classical iterative voting models~\cite{meir2022strategic}. Another avenue for future research is to design empirical studies aimed at identifying the parameters of our model, in order to represent the uncertainty present in both polls and voters’ preferences more accurately in real-world elections. We also believe that the descriptive and expressive power of the framework is an advantage: it can formally capture more behaviors than other frameworks, and can ensure smooth transitions between set and probabilistic models. A challenge will however be to propose parametric models where the number of parameters does not explode. Suppose we are able to do so, it would allow us to reframe classical questions in iterative voting, such as whether equilibria can be computed efficiently~\cite{rabinovich2015analysis}, how an external agent might manipulate the outcome~\cite{baumeister2020manipulation}, or whether it increases social welfare~\cite{kavner2021strategic}. One might also ask how such a model could be adapted to other voting rules. Finally, we believe this modeling can offer a new perspective on other voting problems, such as possible and necessary winners or elicitation of voters’ preferences, by capturing uncertainty in a more quantitative flavor.

\section*{Acknowledgement}
This work is supported under the France 2030 program by the ANR-23-IACL-0007 grant (AI Cluster PostGenIA).

\bibliographystyle{splncs04}
\bibliography{ADT}

@incollection{grabisch2016decision,
  title={Decision Under Risk and Uncertainty},
  author={Grabisch, Michel},
  booktitle={Set Functions, Games and Capacities in Decision Making},
  pages={281--323},
  year={2016},
  publisher={Springer}
}

@article{troffaes2007decision,
  title={Decision making under uncertainty using imprecise probabilities},
  author={Troffaes, Matthias CM},
  journal={International journal of approximate reasoning},
  volume={45},
  number={1},
  pages={17--29},
  year={2007},
  publisher={Elsevier}
}

@article{miranda2015extreme,
  title={Extreme points of the credal sets generated by comparative probabilities},
  author={Miranda, Enrique and Destercke, S{\'e}bastien},
  journal={Journal of Mathematical Psychology},
  volume={64},
  pages={44--57},
  year={2015},
  publisher={Elsevier}
}

@article{denoeux2020interval,
  title={An interval-valued utility theory for decision making with Dempster-Shafer belief functions},
  author={Denoeux, Thierry and Shenoy, Prakash P},
  journal={International Journal of Approximate Reasoning},
  volume={124},
  pages={194--216},
  year={2020},
  publisher={Elsevier}
}

@techreport{hurwicz1951optimality,
  title={Optimality criteria for decision making under ignorance},
  author={Hurwicz, Leonid},
  year={1951},
  institution={Cowles Commission discussion paper, statistics}
}

@book{denneberg2013non,
  title={Non-additive measure and integral},
  author={Denneberg, Dieter},
  volume={27},
  year={2013},
  publisher={Springer Science \& Business Media}
}

@article{gilboa1989maxmin,
  title={Maxmin expected utility with non-unique prior},
  author={Gilboa, Itzhak and Schmeidler, David},
  journal={Journal of mathematical economics},
  volume={18},
  number={2},
  pages={141--153},
  year={1989},
  publisher={Elsevier}
}

@article{terzopoulou2024iterative,
  title={Iterative voting with partial preferences},
  author={Terzopoulou, Zoi and Terzopoulos, Panagiotis and Endriss, Ulle},
  journal={Artificial Intelligence},
  volume={332},
  pages={104133},
  year={2024},
  publisher={Elsevier}
}

@book{black1958theory,
  title={The theory of committees and elections},
  author={Black, Duncan},
  year={1958},
  publisher={Cambridge University Press}
}

@inproceedings{lev2019heuristic,
  title={Heuristic voting as ordinal dominance strategies},
  author={Lev, Omer and Meir, Reshef and Obraztsova, Svetlana and Polukarov, Maria},
  booktitle={Proceedings of the 33rd AAAI Conference on Artificial Intelligence (AAAI 2019)},
  volume={33},
  number={01},
  pages={2077--2084},
  year={2019}
}

@inproceedings{konczak2005voting,
  title={Voting procedures with incomplete preferences},
  author={Konczak, Kathrin and Lang, J{\'e}r{\^o}me},
  booktitle={Proceedings of the Multidisciplinary Workshop on Advances in Preference Handling (IJCAI 2005)},
  volume={20},
  pages={12},
  year={2005}
}

@inproceedings{smets1992concept,
  title={The concept of distinct evidence},
  author={Smets, Philippe and Kennes, R},
  booktitle={Proceedings of the Information Processing and Management of Uncertainty (IPMU 1992)},
  pages={789--794},
  year={1992}
}

@article{destercke2014special,
  title={Special Cases},
  author={Destercke, S{\'e}bastien and Dubois, Didier},
  journal={Introduction to Imprecise Probabilities},
  number={chapter 4},
  pages={79--91},
  year={2014},
  publisher={Wiley}
}

@article{debreu1954representation,
  title={Representation of a preference ordering by a numerical function},
  author={Debreu, Gerard and others},
  journal={Decision processes},
  volume={3},
  pages={159--165},
  year={1954},
  publisher={Wiley New York}
}

@book{meir2022strategic,
  title={Strategic voting},
  author={Meir, Reshef},
  year={2022},
  publisher={Springer Nature}
}

@inproceedings{baumeister2020manipulation,
  title={Manipulation of opinion polls to influence iterative elections},
  author={Baumeister, Dorothea and Selker, Ann-Kathrin and Wilczynski, Ana{\"e}lle},
  booktitle={Proceedings of the 19th International Conference on Autonomous Agents and Multiagent Systems (AAMAS 2020)},
  pages={132--140},
  year={2020}
}

@article{kavner2021strategic,
  title={Strategic behavior is bliss: iterative voting improves social welfare},
  author={Kavner, Joshua and Xia, Lirong},
  journal={Proceedings of the 35th Conference on Neural Information Processing Systems (NeurIPS 2021)},
  volume={34},
  pages={19021--19032},
  year={2021}
}

@article{meir2018strategic,
  title={Strategic voting},
  author={Meir, Reshef},
  journal={Synthesis lectures on artificial intelligence and machine learning},
  volume={13},
  number={1},
  pages={1--167},
  year={2018},
  publisher={Morgan \& Claypool Publishers}
}

@incollection{meir2017iterative,
  title={Iterative voting},
  author={Meir, Reshef},
  booktitle={Trends in Computational Social Choice},
  editor={U. Endriss},
  chapter={4},
  pages={69--86},
  year={2017},
  publisher={AI Access}
}

@article{satterthwaite1975strategy,
  title={Strategy-proofness and Arrow's conditions: Existence and correspondence theorems for voting procedures and social welfare functions},
  author={Satterthwaite, Mark Allen},
  journal={Journal of economic theory},
  volume={10},
  number={2},
  pages={187--217},
  year={1975},
  publisher={Elsevier}
}

@inproceedings{meir2014local,
  title={A local-dominance theory of voting equilibria},
  author={Meir, Reshef and Lev, Omer and Rosenschein, Jeffrey S},
  booktitle={Proceedings of the fifteenth ACM conference on Economics and computation (ACM 2014)},
  pages={313--330},
  year={2014}
}

@inproceedings{rabinovich2015analysis,
  title={Analysis of equilibria in iterative voting schemes},
  author={Rabinovich, Zinovi and Obraztsova, Svetlana and Lev, Omer and Markakis, Evangelos and Rosenschein, Jeffrey},
  booktitle={Proceedings of the 29th AAAI conference on artificial intelligence (AAAI 2015)},
  volume={29},
  number={1},
  year={2015}
}

@inproceedings{meir2015plurality,
  title={Plurality voting under uncertainty},
  author={Meir, Reshef},
  booktitle={Proceedings of the 29th AAAI Conference on Artificial Intelligence (AAAI 2015)},
  volume={29},
  number={1},
  year={2015}
}

@inproceedings{mousseau2024do,
  title={Do we Care about Poll Manipulation in Political Elections?},
  author={Mousseau, Vincent and Surugue, Henri and Wilczynski, Anaëlle},
  booktitle={Proceedings of the 27th European Conference on Artificial Intelligence (ECAI 2024)},
  pages={},
  year={2024}
}

@inproceedings{meir2010convergence,
  title={Convergence to equilibria in plurality voting},
  author={Meir, Reshef and Polukarov, Maria and Rosenschein, Jeffrey and Jennings, Nicholas},
  booktitle={Proceedings of the 24th AAAI conference on artificial intelligence (AAAI 2010)},
  volume={24},
  pages={823--828},
  year={2010}
}

@inproceedings{endriss2016strategic,
  title={Strategic voting with incomplete information},
  author={Endriss, Ulle and Obraztsova, Svetlana and Polukarov, Maria and Rosenschein, Jeffrey S},
  year={2016},
  organization={AAAI Press/International Joint Conferences on Artificial Intelligence}
}

@inproceedings{conitzer2011dominating,
  title={Dominating manipulations in voting with partial information},
  author={Conitzer, Vincent and Walsh, Toby and Xia, Lirong},
  booktitle={Proceedings of the 25th AAAI conference on artificial intelligence (AAAI 2011)},
  volume={25},
  number={1},
  pages={638--643},
  year={2011}
}

@inproceedings{reijngoud2012voter,
  title={Voter Response to Iterated Poll Information},
  author={Reijngoud, A and Endriss, U},
  booktitle={Proceeding of the 4th International Conference on Autonomous Agents and Multiagent Systems (AAMAS 2012)},
  volume={4},
  pages={8},
  year={2012}
}

@article{dey2018complexity,
  title={Complexity of manipulation with partial information in voting},
  author={Dey, Palash and Misra, Neeldhara and Narahari, Yadati},
  journal={Theoretical Computer Science},
  volume={726},
  pages={78--99},
  year={2018},
  publisher={Elsevier}
}

@article{myerson1993theory,
  title={A theory of voting equilibria},
  author={Myerson, Roger B and Weber, Robert J},
  journal={American Political science review},
  volume={87},
  number={1},
  pages={102--114},
  year={1993},
  publisher={Cambridge University Press}
}

@article{hazon2012evaluation,
  title={On the evaluation of election outcomes under uncertainty},
  author={Hazon, Noam and Aumann, Yonatan and Kraus, Sarit and Wooldridge, Michael},
  journal={Artificial Intelligence},
  volume={189},
  pages={1--18},
  year={2012},
  publisher={Elsevier}
}

@inproceedings{kreiss2020undecided,
  title={Undecided voters as set-valued information--towards forecasts under epistemic imprecision},
  author={Kreiss, Dominik and Augustin, Thomas},
  booktitle={Proceedings of the 14th International Conference of Scalable Uncertainty Management (SUM 2020)},
  pages={242--250},
  year={2020},
  organization={Springer}
}


\clearpage
\appendix
\begin{center}
    \LARGE \textbf{Technical Appendix} 
\end{center}

\section{Complete proofs}
\counterexemple*

\begin{proof}
    Here is the counter example with a neighborhood of scores of size 1.
    Let us take the following profile:
\begin{gather*}
a \succ_1 b \succ_1 c \succ_1 d \; ; \; c \succ_2 a \succ_2 b \succ_2 d \; ; \;  c \succ_3 a \succ_3 b \succ_3 d  \\
c \succ_4 d \succ_4 a \succ_4 b \; ; \; a \succ_5 c \succ_5 d \succ_5 b \; ; \;  d \succ_6 b \succ_6 a \succ_6 c  \\
c \succ_7 b \succ_7 d \succ_7 a \; ; \; d \succ_8 b \succ_8 a \succ_8 c \; ; \;  b \succ_9 d \succ_9 c \succ_9 a  \\
b \succ_{10} d \succ_{10} c \succ_{10} a 
\end{gather*}

Voter $8$ will move from $d$ to $a$ with a cardinal difference of 1.
For clarity, let us detail the computation of this first move:
the original score is $s=(2, 2, 3, 3)$, so \[S_1=\{
(2, 2, 3, 3),(1, 2, 3, 3),(2, 1, 3, 3),(2, 2, 2, 3),(2, 2, 3, 2), \] \[ (3, 2, 3, 3),(2, 3, 3, 3),(2, 2, 4, 3),(2, 2, 3, 4)\}\] that corresponds to add/remove one vote. 
When moving from $d$ to $a$, Voter 8 is improved in the first state $(2,2,3,3)$ as it becomes $(3,2,3,2)$, and $a$ is elected instead of $c$, which is better from voter 8's perspective. Voter 8 is also improved for states $(2,1,3,3)$, $(2,2,3,2)$ and $(2,3,3,3)$, 
and deteriorated in states $(2,2,2,3)$, $(2,3,3,3)$ and $(2,2,3,4)$.
Other states are not impacted by this move.
Then voter $2$ will move from $c$ to $a$ with a cardinal difference of 2, voter $8$ will move from $a$ to $d$ with a cardinal difference of 1 and finally voter $2$ will move from $a$ to $c$ with a cardinal difference of 3. This creates a cycle, which prevents convergence. 
    \end{proof}

\cvgtwo*
    
\begin{proof}
    We know that the weights $\beta_k^i$ of our belief functions defined as $\forall i \in N, \forall k \geqslant 1, \mathcal{M}_i(S_k)=\beta_k^i$ are decreasing.
    
    We consider a feasible move $a_i \to a_{i'}$.
    Using the fact that the lower (an upper) expectation is positive homogeneous, i.e., $\alpha \underline{\mathbb{E}}(f)= \underline{\mathbb{E}}(\alpha f)$, we get  
    \begin{align}
    &\alpha \cdot \underline{\mathbb{E}}_{\mathcal{M}}(u_i(a_i’|a_i,s))  + (1-\alpha) \cdot  \overline{\mathbb{E}}_{\mathcal{M}}((u_i(a_i’|a_i,s)) \nonumber \\
    &=\sum_{i=1}^{r_i}\mathcal{M}(S_i) [\alpha \inf_{s\in S} (u_i(a_i’|a_i,s)) + (1-\alpha)\cdot \sup_{s\in S} (u_i(a_i’|a_i,s)) ] \label{eq:Hurw}
    \end{align}
    
    Let us denote by \[\Tilde{u}_i(S,a_i,a'_i)=\alpha \inf_{s \in S} (u_i(a_i’|a_i,s))+ (1-\alpha)\cdot \sup_{s\in S} (u_i(a_i’|a_i,s)) \]
    the term associated to subset $S$, meaning that Equation~\eqref{eq:Hurw} can be rewritten $\sum_{i=1}^{r_i}\mathcal{M}(S_i) \Tilde{u}_i(S,a_i,a'_i)$, and that the strategic move is decided by a weighted average of $\Tilde{u}_i$ values.
    
    We distinguish six possible cases:
    \begin{itemize}
    \item Case A: $\exists s\in S, u_i(a_i’|a_i,s)=1 \text{ and } \forall s\in S, u_i(a_i’|a_i,s)\geq 0$
    \item Case B: $\exists s\in S, u_i(a_i’|a_i,s)=-1 \text{ and } \forall s\in S, u_i(a_i’|a_i,s)\leqslant 0$
    \item Case C: $\exists s\in S, u_i(a_i’|a_i,s)=-1 \text{ and } \exists s\in S, u_i(a_i’|a_i,s)=1 $
    \item Case D: $\forall s\in S, u_i(a_i’|a_i,s)=0 $
    \item Case A.1: $\forall s\in S, u_i(a_i’|a_i,s)=1$
    \item Case B.1: $\forall s\in S, u_i(a_i’|a_i,s)=-1 $
    \end{itemize}
    However, cases A.1 and B.1 can never happen for a non-singleton $S$, for the same reasons as the ones advocated in the proof of Theorem~\ref{cvgone}. We then get
    
    \[\Tilde{u_i}(S,a_i,a'_i) =
    \begin{cases}
        1-\alpha & \text{Case A}, \\
        -\alpha & \text{Case B}, \\
        1-2 \cdot \alpha & \text{Case C}, \\
        0 & \text{Case D}
    \end{cases}\]
    It is clear that $\alpha > \frac{1}{2}$ implies that $\Tilde{u}_i(S,a_i,a'_i) \geq 0$ in cases A only.
    
    Therefore, if \[\alpha \cdot \underline{\mathbb{E}}_{\mathcal{M}}(u_i(a_i’|a_i,s))+ (1-\alpha) \cdot  \overline{\mathbb{E}}_{\mathcal{M}}(u_i(a_i’|a_i,s)) \geq 0\] then
    \[\exists r_i'\leqslant r_i \text{ such that } \exists s\in S_{r_i'}, u_i(a_i’|a_i,s)=1 \text{ and } \forall s\in S_{r_i'}, u_i(a_i’|a_i,s)\geq 0 \]
    
    Which is Case A. The last equivalence comes from the fact that only case A can lead to a positive $\Tilde{u}_i(S,a_i,a'_i)$, and that the criterion is an average of such $\Tilde{u}_i$ values. Therefore, there must exist $r'_i\leqslant r_i$ such that $\Tilde{u}_i(S_{r'_i},a_i,a'_i)=1-\alpha$. If $R^+_i=\{j: |\Tilde{u}_i(S_{j},a_i,a'_i)=1-\alpha\}$ is the set of neighborhood indices in case A, we need for $\sum_{j \in R^+_i}\beta_j^i$ to be large enough for the decision criterion to be positive. In other words, we can accept a move with some case C only if there exists a local dominance move in a smaller neighborhood that receives enough evidence.
    Using the result from section VIII in Meir~\cite{meir2015plurality} which tells us that the convergence holds for any level of uncertainty $r_i$ and any starting point (even non-truthful states) for local dominance strategic behaviors, then we can do a bijection of our strategic moves and get the convergence also.
    \end{proof}
    
    \section{A word on complexity}


Let us briefly situate our model with respect to the computational complexity of computing a manipulation. Prior work has already investigated related questions, and it is useful to distinguish two broad cases. For plurality, when one must evaluate each scenario induced by uncertainty, for instance, under incomplete preferences, this essentially amounts to reasoning over all completions. This is closely related to the possible winner task and is \#P-hard~\cite{conitzer2011dominating}. In particular, under criterion that aggregate across completions (e.g., computing an expectation under a pignistic rule), one cannot avoid considering the full set of completions. By contrast, when the decision criterion is simpler, such as a pessimistic (worst-case) criterion, plurality requires only limited information; it suffices to identify the winner in the relevant  worst-case scenario. In that case, the problem can be rewritten as a flow problem as in Conitzer et al.~\cite{conitzer2011dominating} which can be solved in polynomial time. We may also mention other decision criterion that remain simple enough to be computed in polynomial-time~\cite{dey2018complexity}, such as the optimistic criterion (i.e., the preferred candidate wins in at least one completion) and opportunistic manipulation (i.e., the preferred candidate wins in every completion in which some manipulation is feasible). 

The same dichotomy can be applied to our approaches: if a decision rule requires enumerating all possible completions of a vote, we will end up with \#P-hard problems. This is the case for decision rules involving the pignistic probability, for instance. On the other hand, and provided the belief function mass $\mathcal{M}$ is positive only on a polynomial number of sets, rules that only require identifying one completion remains polynomial. 

In our general model, including these cases as particular cases, we then know that under any criteria that needs to compute every possible scenario then the decision will be also \#P-hard to decide, or worse if we consider quite generic/complex models. However, the case of a pessimistic decision criterion remains solvable in polynomial time as long as the number of sets receiving positive mass (in the case of belief functions) remains polynomial, essentially because it does not require computing probabilities or weights. Thanks to this observation, a standard maximum-flow approach~\cite{conitzer2011dominating} can be applied to obtain a polynomial-time algorithm for manipulation with the pessimistic criterion.

This also opens up an interesting avenue of research: given that in the probabilistic case all decision rules amount to classical expected value, could we weaken the initial probabilistic information, e.g., by considering an inner measure on a coarse partition of linear orders, and take a tractable decision rule (pessimistic and optimistic) in order to approximate intractable probabilistic manipulation rules? To be more precise, if computing an expectation $\mathbb{E}(\cdot)$ for a probability is intractable, can we find a set $\mathcal{P}$ containing this probability such that computing $\underline{\mathbb{E}}(\cdot),\overline{\mathbb{E}}(\cdot)$ is tractable, and where $\underline{\mathbb{E}}(\cdot)$, $\overline{\mathbb{E}}(\cdot)$ are reasonably close to each other? This would let us reach a decision in some cases, and in others, quantify the remaining uncertainty. Note that such a question is made possible only by the fact that our models unify set-based with probabilistic-based decision (respectively known as decision under uncertainty, and decision under risk~\cite{grabisch2016decision}) in a single framework, thereby allowing one to easily transfer notions used in one setting to the other setting. 


\end{document}